\crefname{hypothesis}{Hypothesis}{Hypotheses}
\begin{document}

\newcommand\relatedversion{}
\renewcommand\relatedversion{\thanks{The full version of the paper can be accessed at \protect\url{https://arxiv.org/abs/0000.00000}}} 

\title{\Large Cycle Basis Algorithms for Reducing Maximum Edge Participation}
    \author{Fan Wang\thanks{Department of Computer Science, University of California, Irvine (\email{wangf15@uci.edu}, \email{irani@ics.uci.edu})}
    \and Sandy Irani\footnotemark[1]}

\date{}

\maketitle






\begin{abstract}
We study the problem of constructing cycle bases of graphs with low maximum edge participation, defined as the maximum number of basis cycles that share any single edge. This quantity, though less studied than total weight or length, plays a critical role in quantum fault tolerance, as it directly impacts the overhead of lattice surgery procedures used to implement an almost universal quantum gate set. Building on a recursive algorithm by Freedman and Hastings, we introduce a family of load-aware heuristics that adaptively select vertices and edges to minimize edge participation throughout the cycle basis construction. Our approach improves empirical performance on random regular graphs and on graphs derived from small quantum codes. We further analyze a simplified balls-into-bins process to establish lower bounds on edge participation. While the model differs from the cycle basis algorithm on real graphs, it captures what can be proven for our heuristics without using more complex graph theoretic properties related to the distribution of cycles in the graph. Our analysis suggests that the maximum load of all of our heuristics will be $\Omega(\log^2 n)$. Our results indicate that careful cycle basis construction can yield significant practical benefits in the design of fault-tolerant quantum systems. This question also carries theoretical interest, as it is essentially identical to the basis number of a graph—the minimum possible maximum edge participation over all cycle bases.
\end{abstract}
\section{Introduction}\label{sec:intro}

A \emph{cycle basis} of an undirected graph \( G = (V, E) \) is a minimal set of simple cycles such that every cycle in \( G \) can be expressed as the symmetric difference (i.e., edgewise XOR) of some subset of these basis cycles. For any connected graph with \( n \) vertices and \( m \) edges, the dimension of the cycle space—and hence the number of basis cycles—is exactly \( m - n + 1 \)~\cite{gross2018graph}. 

One intuitive way to see this is via a spanning tree of \( G \): such a tree has exactly \( n - 1 \) edges and no cycles. Each of the remaining \( m - (n - 1) = m - n + 1 \) non-tree edges, when added back to the spanning tree, closes a unique simple cycle. These cycles, formed by combining each non-tree edge with the unique tree path connecting its endpoints, constitute a valid cycle basis. A cycle basis constructed this way is called a \emph{fundamental cycle basis}.

Much of the literature on cycle bases focuses on finding a minimum-weight cycle basis, where the weight of a cycle is the sum of its edge weights, and the weight of the basis is the sum of the weights of all basis cycles~\cite{horton1987polynomial,kavitha2008algorithm,kavitha2009cycle}. In unweighted graphs, this reduces to minimizing the total length of the basis cycles. In contrast, we consider a different structural property that disregards edge weights: the \emph{maximum edge participation} of a basis. This measures the maximum number of basis cycles that any edge participates in. Formally, for a cycle basis $B$ of $G = (V, E)$, we define:
\[
  \mu(B) = \max_{e \in E} \left| \{ C \in B : e \in C \} \right|.
\]
The motivation for minimizing the maximum edge participation comes from a well-known construction for quantum fault tolerance and is described further in Section \ref{sec:motivation}.

\subsection{Previous Work.}
Maximum edge participation is essentially identical to another quantity called the \emph{basis number}~\cite{schmeichel1981basis}, which has been studied under the notion of a \emph{$k$-fold} cycle basis—one in which every edge appears in at most $k$ basis cycles. The minimum such $k$ for which a k-fold cycle basis exists is called the basis number of the graph.
The basis number is the smallest $\mu(B)$ that can be obtained for any cycle basis $B$.

For certain families of graphs, tight bounds on $\mu$—the maximum edge participation in a cycle basis—are known, often implied by results on the \emph{basis number} of the graph. Planar graphs admit a cycle basis with $\mu \le 2$~\cite{mac1936combinatorial}, while complete graphs with $n \ge 5$ admit a basis with $\mu \le 3$~\cite{schmeichel1981basis}. Recent work has further connected $\mu$ to the genus of the graph: if a graph can be embedded on a surface of genus $g$, then it admits a cycle basis with $\mu = O(\log^2 g)$~\cite{miraftabsparse}. There is a trivial upper bound of $O(m)$ on the genus of a graph with $m$ edges, since each additional edge can be accommodated by adding a handle to the surface~\cite{milgram1977bounds}. This implies the existence of a cycle basis with maximum edge participation $\mu = O(\log^2 m)$ for any graph. In this paper, we restrict attention to simple graphs or multigraphs with constant edge multiplicity (as is sufficient for the applications mentioned in \cref{sec:motivation}), which reduces the bound to $\mu = O(\log^2 n)$.

On the other hand, a general lower bound holds for any connected graph: every cycle basis satisfies
\[
\mu \ge \frac{\text{girth}(G) \cdot (m - n + 1)}{m},
\]
as shown in~\cite{banks1982basis}.

For certain constant-degree expander graphs such as \textit{LPS expanders}~\cite{lubotzky1988ramanujan} which exhibit \textit{logarithmic girth} and \textit{genus} $O(n)$, it follows that the best general lower bound for $\mu$ will be
\[
\mu  = \Omega(\log n).
\]

\subsection{The Freedman–Hastings algorithm.} While theoretical bounds on $\mu(B)$ are known for some graph families, few algorithmic approaches explicitly target minimizing this quantity. A notable exception is the recent recursive algorithm of Freedman and Hastings~\cite{freedman2021building}, which on any constant-degree graph constructs a \textit{weakly fundamental} cycle basis with high probability and guarantees
\[
\mu = O(\log^2 n).
\]
A cycle basis is called \emph{weakly fundamental} if its cycles can be placed in a linear ordering such that each cycle contains at least one edge that does not appear in any later cycle in the ordering. Their method proceeds recursively, handling one of three cases at each step:

\begin{itemize}
    \item \textbf{Case 1}: If there is a degree-one vertex, remove it.

    \item \textbf{Case 2}: If there is a degree-two vertex \( v \) with neighbors \( x \) and \( y \):  
    \textbf{Case 2A}: If \( x \) and \( y \) are not connected by an edge, remove \( v \) and add an edge between \( x \) and \( y \).  
    \textbf{Case 2B}: If the edge \( (x, y) \) already exists (i.e., the three vertices form a triangle), add the triangle \( [x, v, y] \) to the cycle basis and remove \( v \).  
    These two subcases are illustrated in Figure~\ref{fig:Demonstration_Alg}.

    \item \textbf{Case 3}: If all vertices have degree at least three, find a short cycle of length \(O(\log n)\), add it to the basis, and randomly remove one of its edges from the graph.

\end{itemize}

\begin{figure}[htbp]
    \centering
    \includegraphics[width=0.49\textwidth]{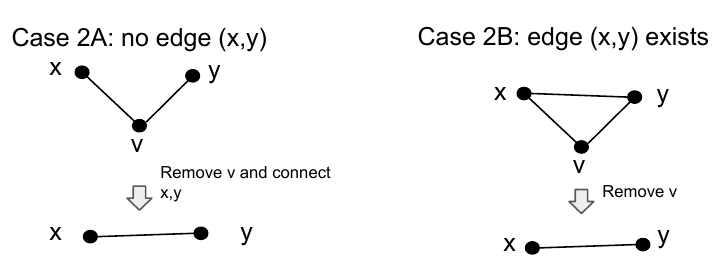}
    \caption{Illustration of \textbf{Case 2A} and \textbf{Case 2B} in the \textit{Freedman--Hastings} cycle basis algorithm. Vertex \( v \) has degree two, with neighbors \( x \) and \( y \). In \textbf{Case 2A}, we proceed recursively on the updated graph and after computing a cycle basis, reinsert v if any cycle basis contains $(x,y)$; in \textbf{Case 2B}, we add the triangle $[x,v,y]$ to the cycle basis and proceed recursively on the updated graph.
     }
    \label{fig:Demonstration_Alg}
\end{figure}

In the third case, a short cycle of length $O(\log n)$ is found using a breadth-first search (BFS). In this context, cross edges are defined as edges connecting a newly discovered vertex to an already visited one—i.e., all non-tree edges in the BFS tree. Starting from any vertex, any cross edge encountered during BFS, together with the paths from its endpoints to their lowest common ancestor, forms a cycle. The cycle formed by the first cross edge in the BFS tree is guaranteed to have length at most $2\lceil \log n \rceil$ (see Lemma 4.3 of~\cite{kavitha2009cycle}). In the \textit{Freedman-Hastings} algorithm, the BFS root is chosen uniformly at random, and they use the cycle generated by the first cross edge encountered.

However, the \( O(\log^2 n) \) bound achieved by this algorithm may be pessimistic for typical graphs. In~\cite{williamson2024low}, it was conjectured that for constant-degree random expanders, it might be possible to construct a cycle basis such that the expected maximum edge participation is only \( O(\log n) \). From this perspective, the \textit{Freedman--Hastings} construction can be seen as a worst-case bound, while more efficient average-case constructions may be possible with more refined cycle selection strategies. More broadly, the problem of minimizing maximum edge participation over all cycle bases remains a largely open combinatorial question, with practical implications in contexts such as quantum error correction and fault-tolerance~\cite{he2025extractors, hastings2021quantum}.

\subsection{Our contribution.}
In the \emph{Freedman–Hastings} algorithm, several design choices remain flexible and open to optimization, including:
\begin{itemize}
\item the starting vertex (root) for BFS;
\item the selection of the cross edge used to extract a short cycle;
\item the choice of which edge to remove from the cycle.
\end{itemize}
We explore heuristics that refine the edge and vertex selection rules, proposing a load-aware variant of the algorithm. The \emph{load} of an edge is defined as the number of basis cycles it has appeared in so far, while the \emph{load} of a vertex is the average load of its incident edges. To better reflect structural changes, we further define how load evolves during transformations: in Case 2A (where a new edge is inserted between two neighbors of a removed vertex), the load of the new edge is set to the maximum of the loads of the two removed edges. In Case 2B (where the connecting edge already exists), we increment the load of that existing edge by one.

Among the variants we tested, the following heuristic performed best empirically on random regular graphs:
\begin{itemize}
\item initialize each BFS from the vertex with the highest load;
\item select the first cross edge that forms a cycle containing the BFS root;
\item remove the \emph{highest-load} edge from the resulting cycle.
\end{itemize}


In Section~\ref{sec:motivation} we give an overview of how the cycle basis problem arises in quantum fault tolerance.
In Section~\ref{sec:experiments}, we present experimental results on random regular graphs to compare several of our best-performing heuristics. Section~\ref{sec:quantum} applies our algorithm to graphs derived from small quantum LDPC codes, demonstrating its practical relevance. The implementation and experimental scripts are available at~\cite{wang2025cyclebasiscode}. Finally, in Section~\ref{sec:theory}, we define a random process based on a balls-and-bins model to analyze and minimize the maximum load on a bin. This process serves as a proxy for the cycle basis problem on a random graph, capturing the distribution of load on edges under the simplifying assumption that cycles are random sets of edges. The edges chosen in a cycle is modeled by a random set of edges. The goal is to capture the best analysis that can be achieved without understanding the more complex distribution of edges chosen when cycles are selected in a real graph. Our analysis suggests that the maximum load of all of our heuristics will be $\Omega(\log^2 n)$.



\section{Motivation from Quantum Fault Tolerance}  
\label{sec:motivation}
Due to the fragile nature of quantum systems, quantum error correction (QEC) is essential for building quantum computers capable of practical applications. QEC protects quantum information from environmental noise by redundantly encoding it using a quantum error-correcting code (QECC). In recent years, a particular class of QECCs known as \emph{quantum low-density parity-check} (QLDPC) codes~\cite{gottesman2013fault} has attracted significant attention for their potential to reduce the overhead associated with encoding and logical operations. 
A parity check is a measurement which measures whether an error has occurred. Low-density parity check, which only operate on a constant number of qubits, are easier to implement in a fault-tolerant way.
Recent works~\cite{panteleev2021quantum,breuckmann2021balanced,panteleev2022asymptotically,leverrier2022quantum} have proposed constructions of \emph{good} QLDPC codes—those with constant rate and linear distance.

However, a major challenge remains: how to perform logical gates fault-tolerantly on information encoded using these codes. A technique known as \emph{lattice surgery} was originally developed for implementing logical gates on surface codes~\cite{horsman2012surface}, and has since been generalized to broader classes of QLDPC codes~\cite{cohen2022low,cross2024linear,williamson2024low,he2025extractors}. The core goal of lattice surgery is to enable fault-tolerant measurements of logical operators, which in turn allows the implementation of an almost universal gate set—the Clifford group. This set, when supplemented by \emph{magic state distillation}~\cite{bravyi2005universal}, yields a universal gate set for quantum computation~\cite{litinski2019game}.

Every QLDPC code has a Tanner graph representation, where physical qubits and checks are nodes, and edges indicate which checks act on which qubits. We define the \emph{weight} of a check to be the number of physical qubits it touches, and the \emph{degree} of a qubit to be the number of checks connected to it.

Without loss of generality, we focus on CSS codes, which involve only two types of checks: $Z$-checks and $X$-checks. (In general, a check can involve both $X$ and $Z$ operators.) These two types of checks correspond to the two common types of errors in quantum systems: $X$-errors (bit-flip errors) and $Z$-errors (phase-flip errors). For CSS codes, the Tanner graph is a tripartite graph—one part for qubits, one for $X$-checks, and one for $Z$-checks—analogous to the bipartite structure in classical LDPC codes.

Correspondingly, there are two types of logical measurements: Pauli $X$ measurements and Pauli $Z$ measurements. In this discussion, we focus on Pauli $X$ measurements; the procedure for Pauli $Z$ measurements is analogous. A Pauli $X$ measurement can be viewed as a measurement on a subset of physical qubits. However, such measurements often have high weight, meaning they involve many physical qubits. Performing high-weight measurements directly is challenging in a fault-tolerant manner.

The goal of lattice surgery is to decompose such high-weight measurements into a sequence of lower-weight, fault-tolerant measurements. The basic idea is to introduce an ancillary system (which can also be thought of as another QLDPC code) that is coupled to the original code on which we wish to perform the measurement. By measuring all the $X$-checks in the ancilla system, one can effectively obtain the outcome of the desired high-weight Pauli $X$ measurement.

There have been several approaches for constructing the ancilla system~\cite{cohen2022low,cross2024linear,williamson2024low}. These constructions share a common starting point: they begin with the Tanner graph and focus on the induced subgraph of the support (i.e., the set of qubits involved in the Pauli $X$ measurement.)
The cycle basis problem comes out of the Williamson et. al. construction.
Although the details of the construction are beyond what we can cover 
here, the result of the construction is 
 a new graph based on the induced subgraph of the Tanner graph in which 
edges represent physical qubits, vertices correspond to $X$-checks, and $Z$-checks are defined via a cycle basis $B$ of the graph $G$, which together define the ancilla system.

Each basis cycle corresponds to a $Z$-check in the ancilla system, and its length determines the weight of that check. Long cycles lead to high-weight checks, which can violate the LDPC property of the original code. However, there is a straightforward method to resolve this: cellulation—the process of subdividing long cycles into multiple short cycles by adding auxiliary edges, thereby restoring LDPC properties.

In contrast, maximum edge participation corresponds to the maximum qubit degree in the ancilla system—that is, the number of $Z$-checks attached to a given qubit. High qubit degrees can violate the LDPC property of the code, and unlike the case of high-weight checks, there is no comparably low-cost  fix such as cellulation. The overhead of the construction is then dominated by the cost of reducing the maximum edge participation.

\section{Experimental Comparison of Different Heuristics on Random Regular Graphs.}
\label{sec:experiments}

In this section, we present four different variants of the original Freedman–Hastings algorithm 
by varying the three key design choices listed in Table~\ref{tab:variants}.
We will refer to the Freedman–Hastings algorithm  as Version $0$.

\begin{table*}[htbp]
\centering
\caption{Variants of the Freedman–Hastings algorithm with different design choices (Version 0 is the original version).}
\label{tab:variants}
\begin{tabular}{|c|c|c|c|}
\hline
\textbf{Variant Name} & \textbf{BFS Root Selection} & \textbf{Cross Edge Selection} & \textbf{Edge Removal Strategy} \\
\hline
Version 0 & Random vertex & First encountered & Random edge \\
\hline
Version 1 & Random vertex & First encountered & Max-load edge \\
\hline
Version 2 & Max-load vertex & First encountered & Max-load edge \\
\hline
Version 3 & Max-load vertex & First forming cycle with root & Max-load edge (prefer incident to root)  \\
\hline
Version 4 & Max-load vertex & First forming cycle with root & Load-weighted probabilistic removal \\
\hline
\end{tabular}
\end{table*}

The only difference between Version 1 and Version 0 is the edge removal strategy: while Version 0 removes a random edge from the selected cycle, Version 1 removes the edge with the highest load (breaking ties randomly). This strategy helps reduce long-term edge congestion by preferentially eliminating the most heavily used edges early in the recursion.

The difference between Version 2 and Version 1 lies in the BFS root selection. While Version 1 selects the BFS root randomly, Version 2 always starts from the vertex with the highest load. Although this factor may seem less impactful than edge removal, empirical results show that prioritizing the max-load vertex leads to better performance. Intuitively, a high-load vertex is likely incident to multiple heavily loaded edges. By starting BFS from such a vertex, the resulting cycle has a higher chance of containing one of these high-load edges, which can then be removed using the same greedy strategy.

Version 3 improves upon Version 2 by refining both the cross edge selection and edge removal strategies. Instead of selecting the first encountered cross edge during BFS, we restrict the choice to those that yield a cycle containing the BFS root. This ensures that the starting vertex—which is chosen based on maximum load—remains in the cycle, increasing the chance of targeting heavily used edges.

Additionally, when multiple edges in the cycle have the same maximum load, we prioritize removing an edge that is incident to the BFS root. This tie-breaking rule is particularly effective in the later stages of the algorithm when the graph becomes sparse. For instance, in a 3-regular graph, removing one edge via a short cycle (as in Case 3) typically creates two degree-two vertices—the endpoints of the removed edge—which will each trigger additional reductions via Case 2; see Figure~\ref{fig:Demonstration_exp} for an illustrative example. 
If the BFS root lies on the cycle and has high load, this refinement increases the likelihood of further removing overloaded edges connected to it, thus improving overall balance in the cycle basis.

\begin{figure}[htbp]
    \centering
    \includegraphics[width=0.49\textwidth]{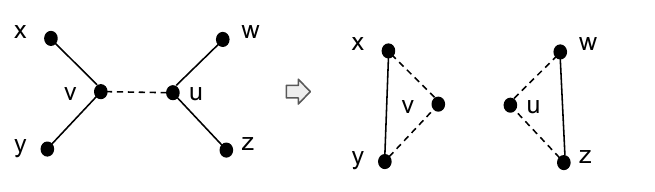}
    \caption{Illustration of how Case 2 introduces additional reductions in a 3-regular graph following Case 3. Suppose edge \( (u, v) \) is removed as part of a short cycle in Case 3. After removal, both \( u \) and \( v \) become degree-two vertices. By Case 2A, edges \( (x, v) \) and \( (v, y) \) are replaced by \( (x, y) \), and the load on \( (x, y) \) is set to the maximum of the loads on \( (x, v) \) and \( (v, y) \). This gives an advantage only if all 3 edges incident to $v$ has high load. A similar reduction applies to the pair \( (w, z) \) resulting from the vertex \( u \).}
    \label{fig:Demonstration_exp}
\end{figure}

Finally, Version 4 introduces a more probabilistic edge removal strategy. Instead of deterministically removing the highest-load edge, we remove edge $i$ from the selected cycle $C$ with probability
$$2^{L(i)} / \sum_{j \in C} 2^{L(j)},$$
where $L(i)$ denotes the load of edge $i$. This softmax-style weighting introduces randomness into the algorithm, which may help avoid worst-case scenarios that can trap greedy variants. However, empirical results suggest that Version 4 does not outperform Version 3 on average.

We evaluate our algorithms on random regular graphs, which are known to be asymptotically close to Ramanujan graphs—optimal spectral expanders—according to the result of Friedman~\cite{friedman2003proof}. This motivates our choice, as most random 
d-regular graphs are constant-degree expanders, resembling the structure of graphs that arise in real quantum LDPC codes.

To generate these graphs, we use Python’s NetworkX package, which internally implements the configuration model of Steger and Wormald~\cite{steger1999generating}. Specifically, we test our heuristics on random 3-regular and 8-regular graphs across a range of sizes. We chose these graphs to roughly mimic the structure of near-term quantum codes, which typically do not exhibit very high connectivity, check weight, or qubit degree. For each graph size, we generate a number of random instances (as detailed in Table~\ref{tab:trialcounts}) and apply all five algorithmic variants. 

\textit{Running time analysis.}
All variants (Versions 1–4) have the same asymptotic running time as the
original Version 0 (the \textit{Freedman–Hastings} algorithm). Each iteration removes at least one edge and may require
operations such as selecting the max-load vertex, performing a BFS, or identifying the
max-load edge within a short cycle—all of which take at most $O(m)$ time. Since there
are at most $O(m)$ such iterations throughout the recursion, the total worst-case
running time remains
$
O(m^2) \quad (\text{equivalently } O(n^2)\text{ for constant-degree graphs}).$  Since the runtime of each algorithm grows roughly quadratically with graph size, we reduced the number of trials for larger graphs to keep the total computation time manageable. All experiments were performed on a desktop workstation (Intel\textsuperscript{\textregistered} Core\textsuperscript{TM} i7-14700F CPU, 16\,GB RAM, 64-bit Windows~11). Completing all trials required approximately three days of total wall-clock time. 

\begin{table}[htbp]
\centering
\caption{Number of random \(d\)-regular graphs generated per size \(n\), for each degree \(d\).}
\label{tab:trialcounts}
\begin{tabular}{|c|c|c|}
\hline
\textbf{Graph size \(n\)} & \textbf{Trials for \(d = 3\)} & \textbf{Trials for \(d = 8\)} \\
\hline
32     & 4000 & 4000 \\
64     & 4000 & 4000 \\
128    & 4000 & 2000 \\
256    & 2000 & 1000  \\
512    & 1000  & 500  \\
1024   & 500  & 250 \\
2048   & 250  & 100   \\
4096   & 100   & 50   \\
8192   & 50   & 20   \\
16384  & 20   & --   \\
\hline
\end{tabular}

\vspace{0.5em}
\raggedright
\footnotesize{Note: No experiments were run for \(d = 8\) at \(n = 16384\) due to runtime constraints.}
\end{table}

We collectively visualize the results across all graph sizes and degrees using the boxplot from multiple trials, as shown in Figure~\ref{fig:boxplot-participation}. Empirically, all of our algorithmic variants demonstrate significant improvement over the original \textit{Freedman–Hastings} algorithm, suggesting that the max-load edge removal strategy is the primary factor driving these performance gains.

\begin{figure*}[htbp]
    \centering
    \includegraphics[width=\textwidth]{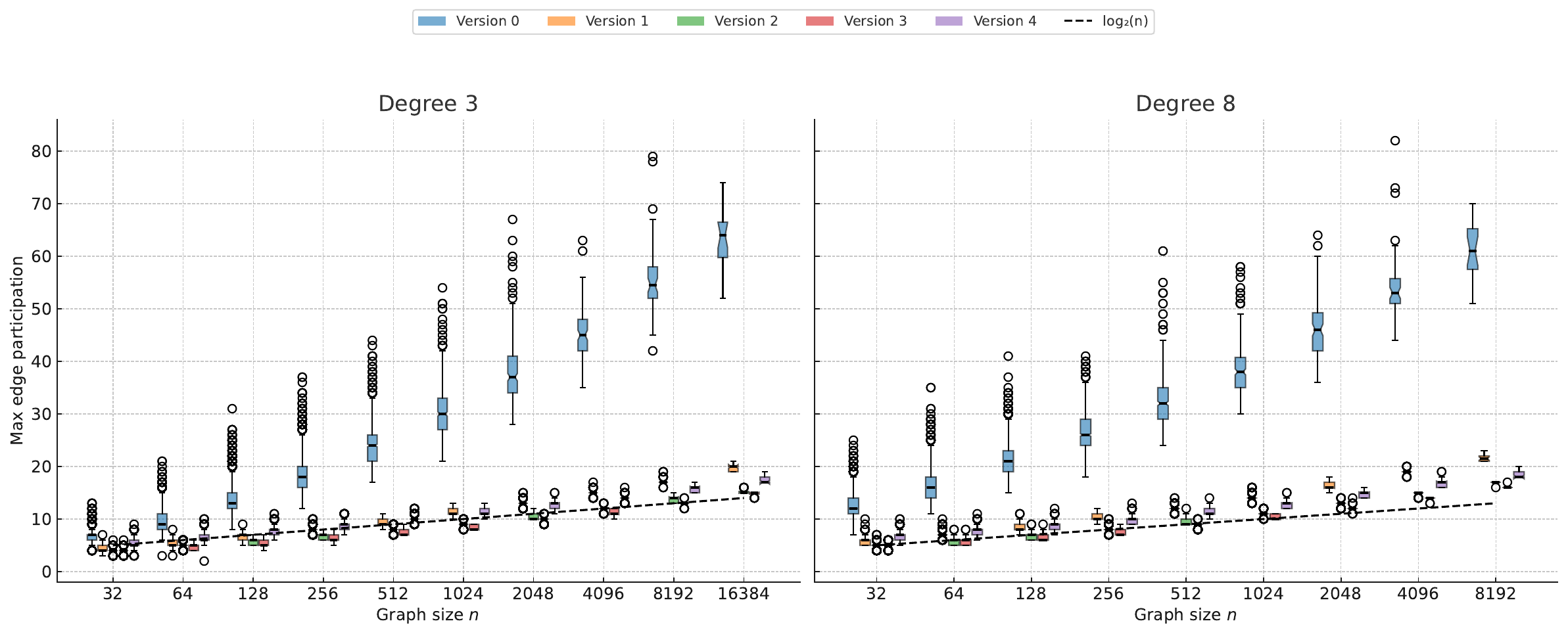}
    \caption{
    Boxplot comparison of the maximum edge participation across five algorithmic variants of the Freedman--Hastings algorithm, evaluated on random \( d \)-regular graphs. The left panel corresponds to \( d = 3 \), and the right panel to \( d = 8 \). The boxplot aggregates the results from thousands of random graph instances (see Table~\ref{tab:trialcounts}). Each variant is color-coded, and the black dashed curve in both panels shows the baseline \( \log_2(n) \) scaling for reference. The boxplots summarize the distribution of the maximum edge participation over all trials: the central line indicates the median, the box bounds the interquartile range (IQR), and the whiskers extend to 1.5×IQR. Outliers beyond that range are represented by small circles.
    }
    \label{fig:boxplot-participation}
\end{figure*}

Among the variants, Version 3 consistently achieves the lowest maximum edge participation on average, followed by Version 2, then Version 4, and finally Version 1. This ordering confirms that both the BFS root selection and cross edge selection strategies contribute meaningfully to reducing edge congestion—beyond the effect of edge removal alone.

To further understand the asymptotic behavior of the different algorithmic variants, we plot in Figure~\ref{fig:version0-vs-3-fit-ratio} the ratio between the actual median maximum edge participation and fitted functions of the form \(c \log_2(n)\) and \(c \log_2^2(n)\), where the coefficient \(c\) is independently optimized for each curve. This normalization removes the leading-order growth rate and highlights the deviation from ideal logarithmic or polylogarithmic scaling. 

\begin{figure*}[htbp]
    \centering
    \includegraphics[width=\linewidth]{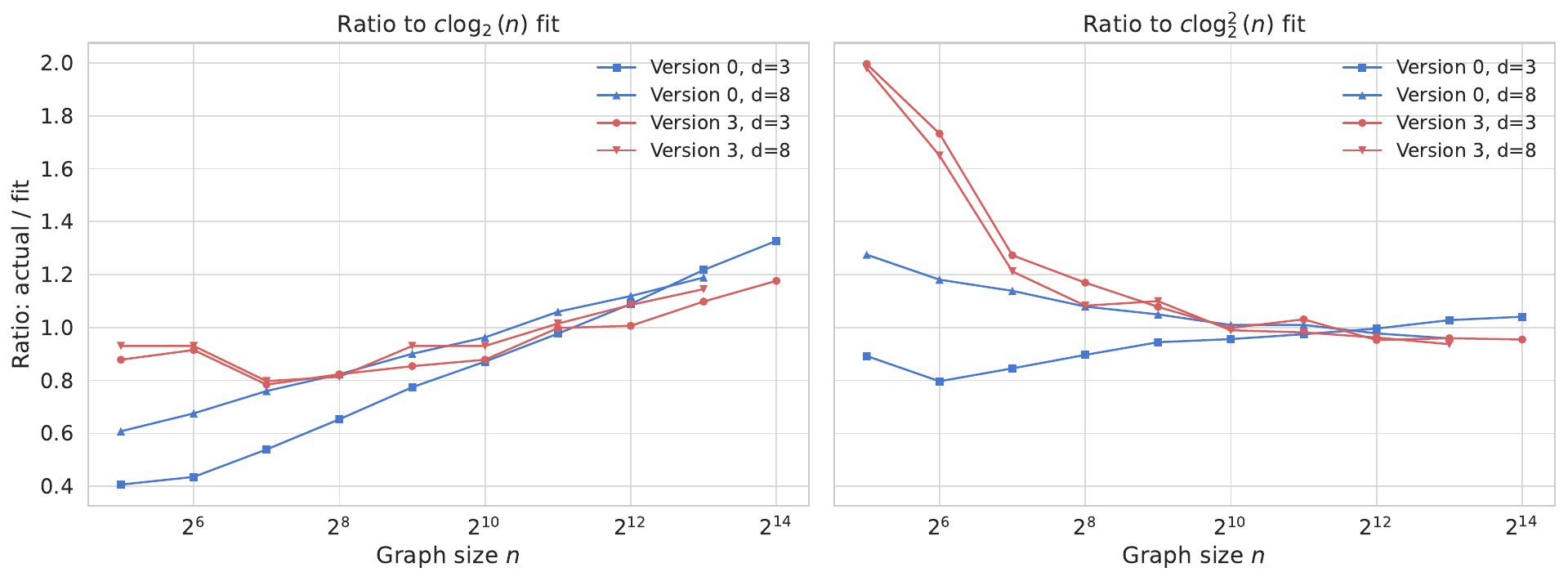}
    \caption{
    Comparison of the ratio between the actual median maximum edge participation and fitted functions \(c\log_2(n)\) (left) and \(c\log_2^2(n)\) (right), for Versions 0 and 3 of the algorithm. Each curve is normalized by its own fitted coefficient \(c\), independently optimized per curve. Solid lines are used throughout, with marker shape distinguishing the degree: squares (degree 3) and triangles (degree 8) for Version 0 (blue), and circles (degree 3) and inverted triangles (degree 8) for Version 3 (red). A flat curve near 1 indicates better scaling agreement with the fitted asymptotic model.
    }
    \label{fig:version0-vs-3-fit-ratio}
\end{figure*}

We include results for Version 0 and Version 3 of the algorithm across both degree-3 and degree-8 random regular graphs. All curves use solid lines, with different marker shapes distinguishing the vertex degree. In the left plot, we observe that the ratios for both Version 0 and Version 3 trend upward, indicating super-logarithmic scaling for both versions. In contrast, the right plot shows that the ratios for Version 0 remain relatively flat, whereas those for Version 3 initially decrease but eventually level off. This suggests that Version 3 offers only a constant-factor improvement in scaling relative to Version 0, rather than a fundamentally different asymptotic behavior. This observation is also consistent with the lower bound established for Version 3 in Section~\ref{sec:theory}.


\section{Experiment on Graphs from Small Quantum Codes}
\label{sec:quantum}

In this section, we evaluate our algorithm on graphs arising from practical quantum codes. Specifically, we use quantum radial codes introduced by Scruby et al.~\cite{scruby2024high}, which are based on the lifted product construction~\cite{panteleev2021quantum} applied to certain quasi-cyclic classical codes. The lifted product is one of the first constructions known to yield good quantum LDPC codes, and it is also effective for generating near-term deployable codes.

Quantum radial codes exhibit promising features for fault-tolerant quantum computing. In particular, they offer comparable error suppression to surface codes while requiring roughly five times fewer physical qubits. Moreover, they have been observed to support single-shot decoding~\cite{bombin2015single}, making them strong candidates for near-term implementations.

We test our algorithm on graphs arising from two specific quantum radial codes presented in~\cite{scruby2024high}: a $[[90,8,10]]$ code and a $[[352,18,20]]$ code. Here, a $[[n,k,d]]$ code refers to a quantum code with $n$ physical qubits, $k$ logical qubits, and code distance $d$. These codes have parameters comparable to those of the Gross code and the Double Gross code~\cite{bravyi2024high}, but have received relatively less attention in the literature.

In~\cite{scruby2024high}, the authors provide a canonical basis for all Pauli-$X$ and Pauli-$Z$ logical operators. In our experiments, we focus on the Pauli-$X$ basis (note that the procedure for Pauli-$Z$ measurements would be essentially identical). Following the logical operator measurement construction in~\cite{williamson2024low}, we extract, for each Pauli-$X$ basis operator, an induced subgraph from the Tanner graph representation of the code. Specifically, each intersecting $Z$-check becomes an edge in the induced graph; in the two quantum radial codes we study, every $Z$-check intersects each logical operator on exactly two qubits, so this conversion is unambiguous.

For the $[[90,8,10]]$ code, this yields 8 graphs (one for each logical $X$ operator), and for the 
$[[352,18,20]]$ code, we obtain 18 graphs. In general, to ensure distance preservation during lattice surgery, one may need to augment these graphs with additional edges to boost their Cheeger constants to at least 1, as this is a sufficient (but not necessary) condition for maintaining code distance. However, in our case, we observe that for the smaller code, all 8 graphs already have Cheeger constant exactly 1. For the larger code, 12 out of the 18 graphs have Cheeger constant $\geq 1$, 3 have approximately 0.91, and the remaining 3 have values near 0.73.

Given these relatively high expansion properties, we directly test our algorithms on these induced subgraphs without further augmentation, expecting the results to closely reflect those obtained in a full lattice surgery simulation.

The graphs induced from the smaller code each have 10 vertices—comparable to the code distance—and are 3-regular. Two representative examples 
are shown in Figure~\ref{fig:sample-graphs_small}. For the larger code, the induced graphs each have approximately 22 vertices and are 4-regular; two examples are shown in Figure~\ref{fig:sample-graphs_large}.

\begin{figure}[htbp]
    \centering
    \begin{minipage}[t]{0.48\textwidth}
        \centering
        \includegraphics[width=\linewidth]{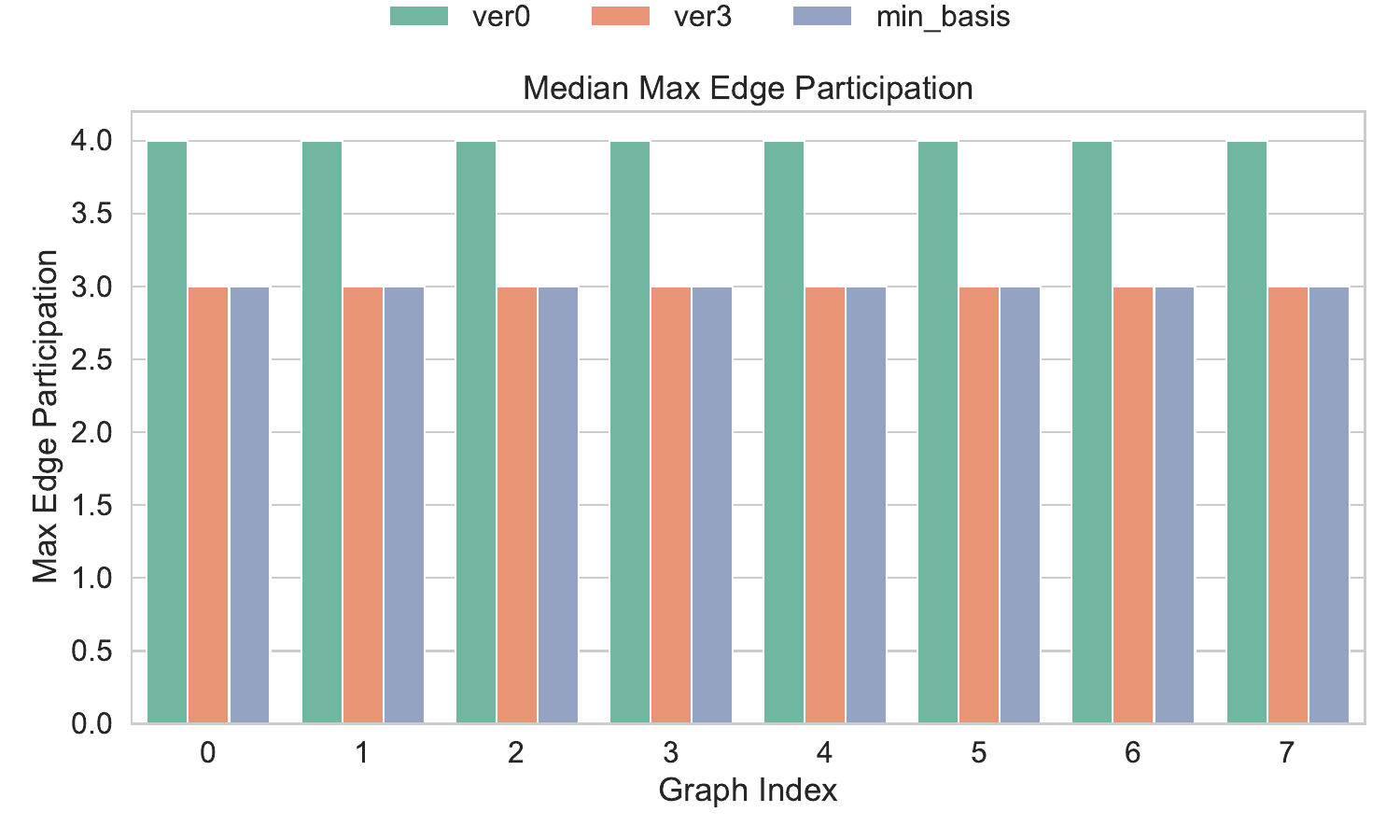}
        \caption{Median maximum edge participation across algorithmic variants for the graphs arising from} the $[[90,8,10]]$ code.
        \label{fig:median-barplot_small}
    \end{minipage}
    \hfill
    \begin{minipage}[t]{0.48\textwidth}
        \centering
        \includegraphics[width=\linewidth]{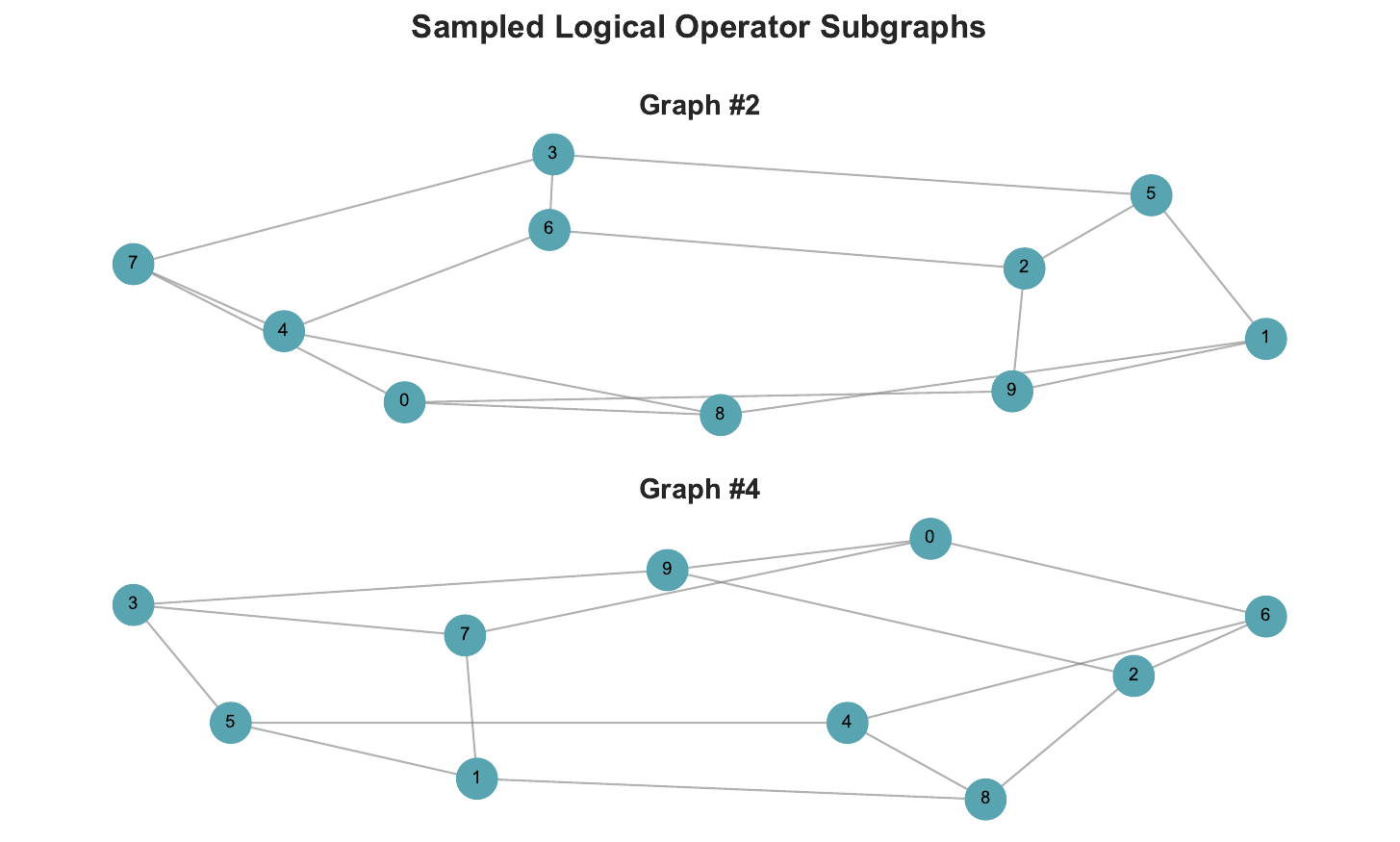}
        \caption{Two example logical-operator--induced subgraphs from the dataset. Each edge represents a qubit in the ancilla system used for lattice surgery, and each vertex corresponds to an $X$-check. The choice of cycle basis determines the associated set of $Z$-checks.}
        \label{fig:sample-graphs_small}
    \end{minipage}
\end{figure}

\begin{figure*}[htbp]
    \centering
    \begin{minipage}[t]{0.48\textwidth}
        \centering
        \includegraphics[width=\linewidth]{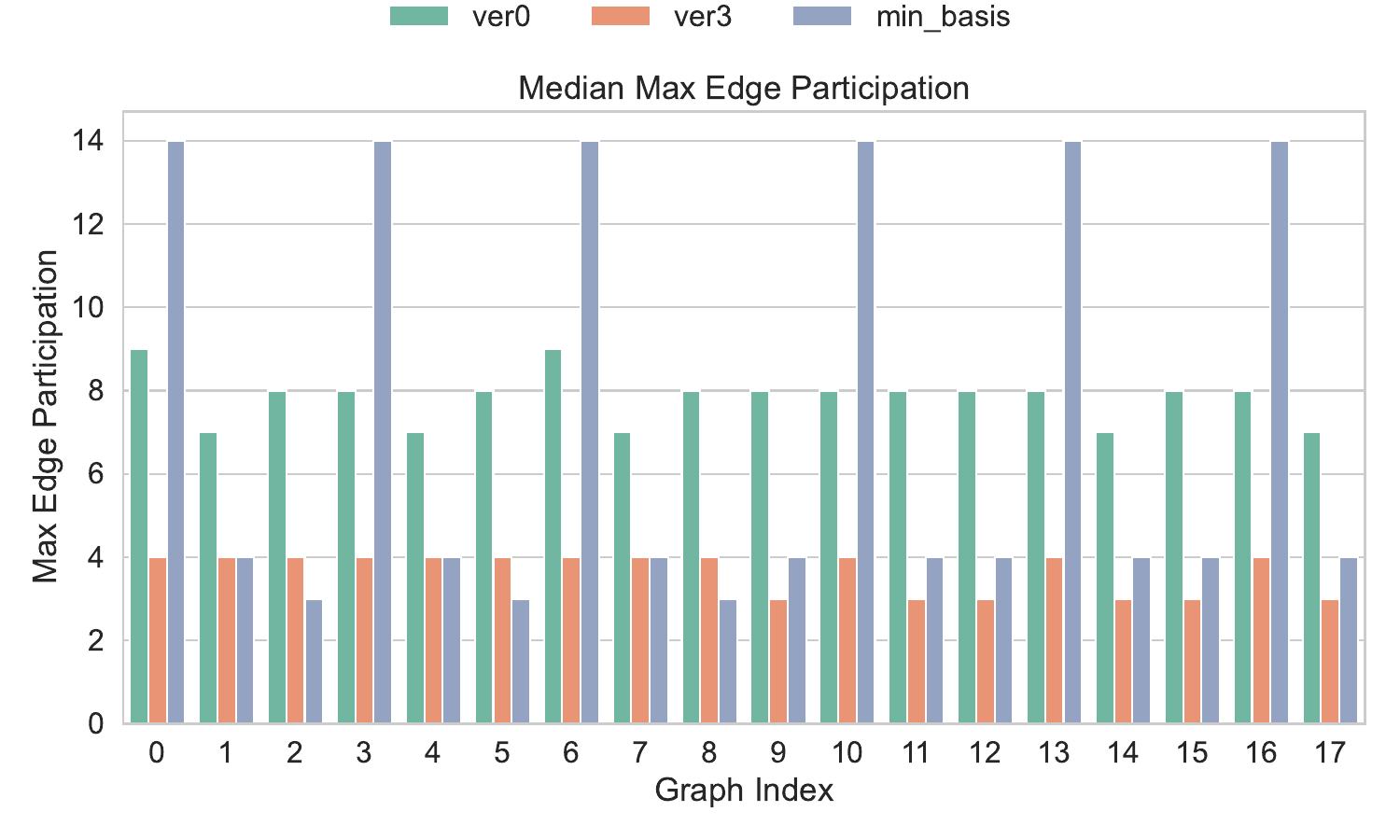}
        \caption{Median maximum edge participation across algorithmic variants for the graphs arising from} the $[[352,18,20]]$ code.
        \label{fig:median-barplot_large}
    \end{minipage}
    \hfill
    \begin{minipage}[t]{0.48\textwidth}
        \centering
        \includegraphics[width=\linewidth]{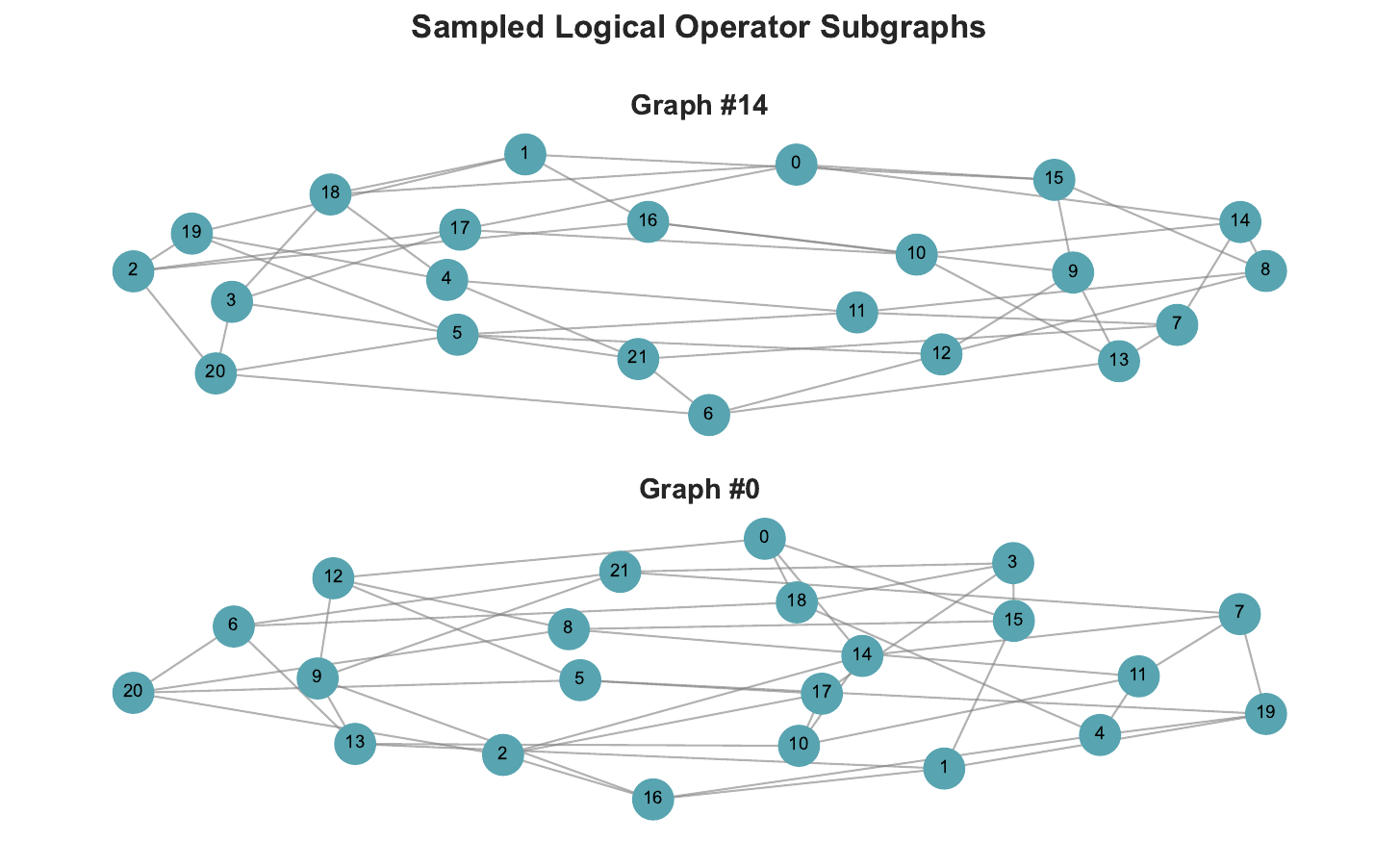}
        \caption{Two example logical-operator--induced subgraphs from the dataset. Each edge represents a qubit in the ancilla system used for lattice surgery, and each vertex corresponds to an $X$-check. The choice of cycle basis determines the associated set of $Z$-checks.}
        \label{fig:sample-graphs_large}
    \end{minipage}
\end{figure*}
The algorithms we test include Version 0 (the original Freedman–Hastings algorithm), Version 3 (our best-performing heuristic), and min-basis (a minimum-weight cycle basis algorithm~\cite{kavitha2008algorithm}). We include the minimum-weight cycle basis as a baseline because, in the context of lattice surgery, the length of the basis cycles can contribute to overhead—though typically not as severely as maximum edge participation.



The performance of the three algorithms is shown in Figures~\ref{fig:median-barplot_small} and~\ref{fig:median-barplot_large}, corresponding to the smaller and larger codes, respectively. For each graph derived from the smaller code, we run each algorithm 100 times and report the median result; for the larger code, each graph is evaluated over 500 trials.

We observe that for the smaller code, the performance differences among the algorithms are relatively minor. However, for the larger code, the performance gap becomes much more pronounced. Version 3 consistently performs well, achieving roughly a 50\% reduction in maximum edge participation compared to Version 0. The min-basis algorithm occasionally matches Version 3 but can also yield the worst performance—this is expected, as min-basis is a deterministic algorithm optimized for cycle length rather than maximum edge participation.

\section{A balls-into-bins model as a proxy}\label{sec:theory}

We consider a balls-into-bins model as a proxy for the behavior of our algorithms on 3-regular graphs. In this model, each bin corresponds to an edge in the graph, and each selected cycle is a subset of   $k$ bins.
The load on each edge is represented by placing a ball in each of the selected bins. Then three bins are removed, representing the edges
that are removed from the graph. Note that when the graph is exactly $3$-regular, three edges are removed in each iteration of Case $3$, as illustrated in Figure \ref{fig:Demonstration_exp}. The first edge removal comes from Case $3$, where an edge is removed from the chosen cycle. This creates two degree-$2$ vertices, each of which results in the removal of another edge from Case $2$.

The goal of the model is to understand the extent to which we can analyze the asymptotic behavior of our best variant without using detailed graph-theoretic properties. The variants of the cycle basis algorithm discussed in the previous sections allow for selecting a cycle that includes up to three heavily loaded edges. The rest of the edges in the cycle depend on the cross edges encountered in constructing the BFS tree.
In the balls-to-bins model, we model the selection of a cycle as selecting
the three most heavily loaded edges/bins, and the rest of the $k-3$ bins are selected at random. Then the three most heavily loaded bins are removed from the system, representing the removal of those edges from the graph. This model is favorable to the algorithm by allowing the algorithm to select and remove the three most heavily loaded bins in the entire system. 
In a real graph, the three most heavily loaded edges might not even be located on a single short cycle.
The selection of the rest of the edges is simplified to be a random set. Intuitively, getting a better analysis of the behavior of our cycle basis algorithms will require a more sophisticated understanding of the distribution of the edges participating in each selected cycle. 

There are a few other ways in which our proxy process differs from cycle basis selection in a real graph.
The proxy process assumes that the cycle selected will have length exactly $k = 2\left\lceil \log_2 n \right\rceil$ in each round.
The versions which select the cycle arising from the first cross edge encountered may yield shorter cycles, resulting in less load on the edges.
However, in a random graph, the first cross edge will appear when the BFS tree has approximately 
$\sqrt{n}$ vertices, resulting in a cycle of length $\Omega(\log n)$. Intuitively, if $k$ vertices have been discovered, each new edge has probability about $k/n$ of connecting to an already visited vertex, so the expected number of cross edges is $\Theta(k^2/n)$. When $k$ reaches $\sqrt{n}$, this expectation becomes constant, marking the point where the first cycle typically appears.
Thus, in the absence of a more involved model for how the graph evolves over time, we do not expect to be able to leverage this effect to obtain an asymptotically better  analysis. 
The proxy process also does not capture vertex removals from Cases $1$ and $2B$. However, our numerical results suggest that these are relatively infrequent events and therefore do not contribute significantly to edge load as shown in \cref{Sec: B}. 

The balls-and-bins process is called \emph{Process 1} and is given in Algorithm \ref{process1}.
However, Process~1 remains analytically challenging. To facilitate analysis, we define a further simplification,  which still preserves the essential features needed for bounding load growth. 
In the simplified process, we define an \emph{epoch} as a sequence of iterations in which the number of buckets $m$ decreases by a factor of $2$. During each iteration within an epoch, we add a ball to each of $k$
randomly chosen bins and decrease $m$, the counter for the number of buckets, by $3$.
However, the buckets are not actually removed until the end of an epoch, at which point a new epoch begins. The delayed-removal process is called \emph{Process 2} and is given in Algorithm \ref{process2}.



\begin{algorithm}
\caption{Process 1}
\begin{algorithmic}
\label{process1}
\REQUIRE $M  > 0$, the initial number of buckets.
\STATE $m \gets M$
\STATE $n \gets \tfrac{2}{3}M$
\STATE All $m$ buckets start empty.
\STATE $m_{\min}$ is some fixed constant.
\WHILE{$m > m_{\min}$ }
\STATE $k \gets 2\left\lceil \log_2 n \right\rceil$
\STATE Select the $3$ buckets with the highest  load.
  \STATE Select $k - 3$ additional buckets  at random.
\STATE  Add one ball to each of the $k$ selected buckets.
\STATE Remove the three most heavily loaded buckets.
\STATE $m \gets m - 3$
\STATE $n \gets \tfrac{2}{3}m$
\ENDWHILE
\end{algorithmic}
\end{algorithm}

\begin{algorithm}
\caption{Process 2 (delayed removal)}
\begin{algorithmic}
\label{process2}
\REQUIRE $M  > 0$, the initial number of buckets.
\STATE $m \gets M$
\STATE $j \gets 1$
\STATE All $m$ buckets start empty.
\STATE $m_{\min}$ is some fixed constant.
\WHILE{$m > m_{\min}$ }
\STATE $m_{old} \gets m$
\STATE $n \gets \tfrac{2}{3}m$.
\STATE $k \gets 2\left\lceil \log_2 n \right\rceil$
\WHILE{$m > \frac{M}{2^{j}}$}
  \STATE Select $k $  buckets  at random.
\STATE  Add one ball to each of the $k$ selected buckets.
\STATE $m \gets m-3$
\ENDWHILE
\STATE $r \gets (m_{old} - m)$
\STATE Remove the $r$ most heavily loaded buckets.
\STATE $j \gets j+1$
\ENDWHILE
\end{algorithmic}
\end{algorithm}




This delayed-deletion model allows additional load to accumulate before pruning. As a result, more balls are discarded during each epoch compared to Process~1. Therefore, any lower bound established for Process~2 also applies to Process~1, and thus to the original algorithm. In \cref{Sec: C}, we provide a formal proof that Process~2 lower-bounds Process~1 if we select \((k - 3)\) buckets instead of \(k\) buckets in Process~2; note that this modification does not affect the asymptotic lower bound for Process~2 discussed below.

In what follows, we prove the following theorem, which provides a lower bound on the maximum load $L_{\max}$ in Process~2. All logairthms are base $2$, unless stated otherwise.
\begin{theorem}
\label{theorem:main}
For any fixed \(c \in (0, 0.16)\), process~2 satisfies
\[
   L_{\max}
   \;=\;
   \Omega\!\bigl(\log^2 M \bigr)
   \quad\text{with high probability}.
\]
Explicitly,
\[
   L_{\max}
   \;\ge\;
   \frac{c^2}{6} \left(1 - \frac{c}{2} \right) \log^2 M
   \quad\text{w.h.p.}
\]
\end{theorem}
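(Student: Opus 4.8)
The plan is to track, across epochs, a small pool of heavily loaded buckets and show that their load grows by roughly $\Theta(\log M)$ per epoch while the pool stays large enough to survive pruning; iterating over the $\Theta(\log M)$ epochs then yields the $\Omega(\log^2 M)$ bound. Concretely, fix the constant $c$ and, at the start of epoch $j$, consider the buckets that currently have load at least (roughly) $\tfrac{c}{2}\,j\log M$; call this set $S_j$. Within epoch $j$ there are $m_{old}-m \approx m_{old}/2$ iterations, in each of which $k = 2\lceil\log_2 n\rceil = \Theta(\log M)$ buckets are chosen uniformly at random from the $\approx m_{old}/2$ active buckets (delayed removal means the pool of active buckets is fixed throughout the epoch). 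So a fixed bucket receives, in expectation over the epoch, about $(\text{number of iterations}) \cdot k / m_{old} \approx \tfrac{1}{2}\cdot\Theta(\log M) = \Theta(\log M)$ new balls.

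First I would make this precise: condition on a bucket being in $S_j$, and let $X$ be the number of balls it gains during epoch $j$. Then $X$ is a sum of (nearly) independent indicators (one per iteration, each with success probability $\approx k/m$, where $m$ slowly shrinks from $m_{old}$ to $m_{old}/2$), so $\mathbb{E}[X] = \Theta(\log M)$ with the leading constant coming out to about $\tfrac{c}{2}\log M$ after one pins down the $\lceil\cdot\rceil$ and the $\tfrac{2}{3}$ factors — this is where the explicit constant $\tfrac{c^2}{6}(1-\tfrac{c}{2})$ is forged. A Chernoff bound shows $X \ge (1-\epsilon)\mathbb{E}[X]$ except with probability $\exp(-\Omega(\log M)) = M^{-\Omega(1)}$. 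Next I would argue the pool does not shrink too fast: at the end of epoch $j$ we remove the $r \approx m_{old}/2$ most heavily loaded buckets, but we only need one bucket of high load to survive to the final epoch, so it suffices to show $|S_{j+1}| \gg 1$ — in fact I would show $|S_{j+1}|$ stays polynomially large (say $\ge M^{\Omega(1)}$) for all epochs, by a counting/averaging argument: the total number of balls thrown in epoch $j$ is $\approx k\cdot(m_{old}/2)$, so by Markov only an $o(1)$ fraction of active buckets can have load exceeding a large constant multiple of the mean, hence the buckets with load $\ge \tfrac{c}{2}(j{+}1)\log M$ that get pruned are few relative to $|S_j|$ as long as $c$ is small enough — this is precisely the role of the hypothesis $c < 0.16$.

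Then I would set up the induction cleanly: let $\mathcal{E}_j$ be the event that at the start of epoch $j$ there are at least (say) $M^{\delta}$ buckets with load $\ge \tfrac{c}{2} j \log M$, for a suitable $\delta = \delta(c) > 0$. The base case $\mathcal{E}_1$ is trivial (all buckets have load $0 \ge 0$, or after epoch $1$ a large fraction have load $\approx \tfrac{c}{2}\log M$). For the inductive step, on $\mathcal{E}_j$ I take the $M^{\delta}$ witnesses, apply the Chernoff bound to each (union bound over $M^{\delta}$ buckets costs $M^{\delta}\cdot M^{-\Omega(1)}$, still $o(1)$ if $\delta$ is chosen below the Chernoff exponent), conclude most of them reach load $\ge \tfrac{c}{2}(j{+}1)\log M$, and then show the pruning step removes only a bounded fraction of them, giving $\mathcal{E}_{j+1}$. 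After $J = \Theta(\log M)$ epochs (the loop runs until $m$ drops below the constant $m_{\min}$, i.e. $J = \log_2(M/m_{\min})$), the surviving witness has load $\ge \tfrac{c}{2}\cdot J\cdot\log M = \Omega(\log^2 M)$, and chasing the constants gives the stated $\tfrac{c^2}{6}(1-\tfrac{c}{2})\log^2 M$; a final union bound over the $J = O(\log M)$ epochs keeps the total failure probability $o(1)$.

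\textbf{Main obstacle.} The delicate part is the survival argument for the pool across the pruning step: one must show that the $r \approx m_{old}/2$ buckets removed at the end of an epoch do not wipe out the high-load witnesses, which requires controlling the \emph{entire load distribution} (not just one bucket's expectation) — specifically, bounding how many buckets can have load in the "dangerous" upper range. I expect this to hinge on a second-moment or Markov-type estimate on the number of high-load buckets, and it is exactly here that the numerical threshold $c < 0.16$ enters: it is the regime in which the mean load grows slowly enough (relative to the fraction $1/2$ of buckets pruned) that a witness pool of polynomial size persists through every epoch. A secondary technical nuisance is that the per-iteration selection probabilities are not perfectly i.i.d.\ (the denominator $m$ decreases within the epoch, and sampling $k$ distinct buckets is without replacement), but these are mild: $m$ only varies by a factor of $2$ within an epoch and $k = O(\log M) \ll m$, so standard coupling or negative-association arguments let the Chernoff bound go through with only a constant-factor loss absorbed into the choice of $c$.
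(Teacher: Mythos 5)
Your proposal shares the same core ingredients as the paper's proof — the Chernoff lower tail per epoch, the choice of $c$ small enough, and the idea that a bucket which accumulates roughly $c\cdot\Theta(\log M)$ load in each of $\Theta(\log M)$ epochs achieves load $\Omega(\log^2 M)$ — but it takes a structurally different and, as written, incomplete route, and the gap you flag as the ``main obstacle'' is exactly where the argument breaks.

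The paper does not track a pool $S_j$ of witnesses across epochs. Instead it defines a bucket to be \emph{bad} in epoch $j$ if its \emph{incremental} load that epoch is $\le c\mu_j$ (where $\mu_j = k_j/6$), applies Chernoff and Markov to show the number of bad buckets in epoch $j$ is $O(m_j^{1-\alpha/2})$ w.h.p.\ with $\alpha = (1-c)^2/(6\ln 2)$, and then union-bounds over only $L = \lceil (c/2)\log_2 M\rceil$ epochs to get that the total number of buckets that are \emph{ever} bad is $O(M^{1-\alpha/2})$. The punchline is a pure counting argument: since $c < 0.16$ implies $\alpha > c$, this total is smaller than $m_L = M^{1-c/2}$, the number of buckets still present after $L$ epochs. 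Hence some surviving bucket was never bad, and its load is $\ge \sum_{j<L} c\mu_j = \Omega(\log^2 M)$. This sidesteps the pruning issue entirely: one never has to reason about \emph{which} buckets get removed, only that the number of ever-bad buckets is outnumbered by the survivors. Your inductive pool-tracking would require showing that pruning does not deplete $S_j$, which is precisely the step you leave as a sketch and identify as hard; it is not needed if you switch to the paper's ``never-bad vs.\ survivor'' count. (Incidentally, your intuition that pruning threatens the witnesses is somewhat misplaced: with $c < 0.16$ the witness threshold $\tfrac{c}{2}\log M$ per epoch is below the per-epoch mean $\approx \tfrac{1}{3}\log M$, so witnesses are \emph{not} preferentially deleted — but demonstrating that cleanly is exactly the bookkeeping the counting argument avoids.)

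Two concrete errors in the proposal. First, you run the induction over all $J = \log_2(M/m_{\min})$ epochs, but this is both infeasible and unnecessary: near the end of the process only $m_{\min} = O(1)$ buckets remain, so no pool of size $M^{\delta}$ can survive that far, and moreover $k_j = 2\lceil\log_2(\tfrac{2}{3}m_j)\rceil$ has shrunk to $O(1)$, so the per-epoch gain is no longer $\Theta(\log M)$. The paper stops at $L = \lceil (c/2)\log_2 M\rceil$ epochs, where $m_L$ is still polynomially large and $k_j = \Theta(\log M)$ throughout; this truncation is what produces the factor $(1 - c/2)$ in the stated constant $\tfrac{c^2}{6}(1 - \tfrac{c}{2})$. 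Second, the role of the threshold $c < 0.16$ is not, as you guess, a second-moment pool-survival condition; it is the much simpler algebraic requirement $\alpha(c) = (1-c)^2/(6\ln 2) > c$, which makes the ever-bad count $M^{1-\alpha/2}$ strictly smaller than the survivor count $M^{1-c/2}$.
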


Before proving the theorem, we develop several lemmas. Define $m_j = m / 2^j$, $n_j = \tfrac{2}{3}m_j$, and $k_j = 2\left\lceil \log_2 n_j \right\rceil$. During epoch~$j$, we begin with $m_j$ buckets and perform ball tossing for $m_j / 6$ rounds. In each round, we toss $k_j$ balls into $k_j$ distinct buckets chosen uniformly at random \emph{without replacement}. No deletions occur during the rounds.

Consider what happens at one epoch first. Fix an epoch j and the probability for any fixed bucket being chosen a single round is $k_j/m_j$. The incremental load
of any bucket before removing buckets at the end of  epoch~$j$ is
\[
   X_j \;\sim\;
   \operatorname{Binomial}\!\Bigl(\tfrac{m_j}{6},
                                   \tfrac{k_j}{m_j}\Bigr),
   \mu_j := \mathbb{E}[X_j]
          = \frac{m_j}{6}\,\frac{k_j}{m_j}
          = \frac{k_j}{6}.
\]

Fix a constant \(c\in(0,0.16)\) and call a bucket  
\emph{bad in epoch $j$} if the incremental load satisfies  
\[
   X_j \leq c\,\mu_j .
\]
During epoch~$j$ let $B_j$ be the number of \textit{bad} buckets. 
We state the following two lemmas, whose proofs are provided in \cref{Sec: A}.

\begin{lemma}[Few bad buckets in one epoch]\label{lem:few}
Fix $c \in (0, 0.16)$ and set
$\alpha = \frac{(1-c)^2}{6 \ln 2}$.
During epoch~$j$, let $B_j$ be the number of buckets whose incremental
load is at most $c\mu_j$. Then
\[
  \Pr\left(B_j \geq \left(\tfrac{3}{2}\right)^\alpha m_j^{1 - \frac{\alpha}{2}}\right) \leq m_j^{-\frac{\alpha}{2}}.
\]
\end{lemma}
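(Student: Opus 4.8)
The plan is to bound the probability that a single bucket is bad in epoch $j$ via a multiplicative Chernoff bound, and then use Markov's inequality on $B_j$. First I would compute the per-bucket failure probability. Since $X_j \sim \operatorname{Binomial}(m_j/6, k_j/m_j)$ with mean $\mu_j = k_j/6$, the lower-tail Chernoff bound gives, for $0 < c < 1$,
\[
  \Pr(X_j \le c\mu_j) \;\le\; \exp\!\left(-\tfrac{(1-c)^2}{2}\,\mu_j\right)
  \;=\; \exp\!\left(-\tfrac{(1-c)^2}{12}\,k_j\right).
\]
Now substitute $k_j = 2\lceil \log_2 n_j \rceil \ge 2\log_2 n_j = \tfrac{2}{\ln 2}\ln n_j$, so that
\[
  \Pr(X_j \le c\mu_j) \;\le\; \exp\!\left(-\tfrac{(1-c)^2}{6\ln 2}\,\ln n_j\right)
  \;=\; n_j^{-\alpha}
\]
with $\alpha = \tfrac{(1-c)^2}{6\ln 2}$, exactly as in the statement.

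Next I would pass to $B_j = \sum_{\text{buckets}} \mathbf{1}[X_j \le c\mu_j]$. There are $m_j$ buckets, so by linearity of expectation $\mathbb{E}[B_j] \le m_j \cdot n_j^{-\alpha}$. Using $n_j = \tfrac{2}{3}m_j$, we have $n_j^{-\alpha} = (\tfrac{3}{2})^\alpha m_j^{-\alpha}$, hence $\mathbb{E}[B_j] \le (\tfrac{3}{2})^\alpha m_j^{1-\alpha}$. Applying Markov's inequality with threshold $t = (\tfrac{3}{2})^\alpha m_j^{1-\alpha/2}$ yields
\[
  \Pr\!\left(B_j \ge \left(\tfrac{3}{2}\right)^\alpha m_j^{1-\alpha/2}\right)
  \;\le\; \frac{\mathbb{E}[B_j]}{t}
  \;\le\; \frac{(\tfrac{3}{2})^\alpha m_j^{1-\alpha}}{(\tfrac{3}{2})^\alpha m_j^{1-\alpha/2}}
  \;=\; m_j^{-\alpha/2},
\]
which is the claimed bound.

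The one genuine subtlety — and the only place the constant $c < 0.16$ seems to matter — is making sure the Chernoff exponent is applied correctly: the indicator events across distinct buckets are \emph{not} independent (the $k_j$ balls per round are placed without replacement), but Markov's inequality needs only the expectation, so negative correlation is harmless here and no concentration of $B_j$ itself is required. The main thing to check carefully is the rounding in $k_j = 2\lceil \log_2 n_j\rceil$ versus $2\log_2 n_j$ (which only helps, since $\lceil x \rceil \ge x$) and the conversion between $\log_2$ and $\ln$, so that the final exponent lands exactly on $\alpha$; I would also note that $c \in (0,0.16)$ is presumably imposed so that $\alpha$ stays in a useful range for the downstream argument (e.g.\ $\alpha < 2$, so $m_j^{1-\alpha/2}$ is still growing and the bad-bucket count is genuinely sublinear), rather than being needed for this lemma in isolation.
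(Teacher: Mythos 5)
Your proof is correct and follows essentially the same route as the paper's: a multiplicative Chernoff lower-tail bound on a single bucket's incremental load, followed by linearity of expectation and Markov's inequality on the count $B_j$. Your remarks about negative correlation being harmless (since only the first moment of $B_j$ is used) and about $c < 0.16$ being irrelevant to this lemma are both correct, though the actual downstream reason for the cutoff is to guarantee $\alpha(c) > c$ in the proof of the main theorem, not $\alpha < 2$.
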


\begin{lemma}[Few bad buckets over $L$ epochs]
\label{lem:few_total}
Fix $c \in (0, 0.16)$ and define $\alpha = \frac{(1-c)^2}{6\ln 2}$. 
Let \( L = \left\lceil \frac{c}{2} \log_2 M \right\rceil \). Then with probability $1 - o(1)$, the total number of buckets that are bad in \emph{any} of the first $L$ epochs is at most
\[
   C(c)\,M^{1 - \frac{\alpha}{2}},
   \qquad\text{where}\qquad
   C(c) := (\tfrac{3}{2})^\alpha \cdot \frac{1}{1 - 2^{-(1 - \frac{\alpha}{2})}}.
\]
\end{lemma}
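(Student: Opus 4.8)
The plan is to bound the total number of buckets that are bad in any of the first $L$ epochs by a union bound over epochs, using Lemma~\ref{lem:few} as the per-epoch estimate and then summing the resulting geometric series. First I would note that a bucket can only be bad in epoch $j$ if it is still present at the start of epoch $j$; since the number of buckets at the start of epoch $j$ is $m_j = M/2^{j-1}$ (or $M/2^j$ depending on indexing, which I would fix consistently with Lemma~\ref{lem:few}), the count $B_j$ of bad buckets in epoch $j$ is automatically at most $m_j$. Applying Lemma~\ref{lem:few}, with probability at least $1 - m_j^{-\alpha/2}$ we have $B_j \le (3/2)^\alpha m_j^{1-\alpha/2}$.

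Next I would take a union bound over $j = 1, \dots, L$. The failure probability is at most $\sum_{j=1}^{L} m_j^{-\alpha/2} = \sum_{j=1}^{L} (M/2^{j-1})^{-\alpha/2}$, which is a geometric series with ratio $2^{\alpha/2} > 1$; the largest term is the one with $j = L$, namely $m_L^{-\alpha/2} = (M/2^{L-1})^{-\alpha/2}$. Since $L = \lceil \frac{c}{2}\log_2 M\rceil$, we have $m_L \ge M^{1-c/2}$ up to constants, so each term is $M^{-\Omega(1)}$ and the whole sum is $o(1)$; this gives the ``with probability $1-o(1)$'' clause. On this good event, the total number of buckets bad in \emph{any} of the first $L$ epochs is at most
\[
  \sum_{j=1}^{L} (\tfrac{3}{2})^\alpha m_j^{1-\alpha/2}
  = (\tfrac{3}{2})^\alpha \sum_{j=1}^{L} \left(\tfrac{M}{2^{j-1}}\right)^{1-\alpha/2}
  \le (\tfrac{3}{2})^\alpha M^{1-\alpha/2} \sum_{i=0}^{\infty} 2^{-i(1-\alpha/2)}
  = (\tfrac{3}{2})^\alpha \cdot \frac{1}{1 - 2^{-(1-\alpha/2)}} \cdot M^{1-\alpha/2},
\]
which is exactly $C(c)\,M^{1-\alpha/2}$. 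Here I would need to check that $1 - \alpha/2 > 0$ so the geometric series converges: since $c \in (0,0.16)$, $\alpha = (1-c)^2/(6\ln 2) < 1/(6\ln 2) \approx 0.24 < 2$, so $1 - \alpha/2 > 0$ and the sum is finite.

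I do not expect any serious obstacle here — this is essentially bookkeeping once Lemma~\ref{lem:few} is in hand. The one point requiring care is the interplay between the two uses of $m_j$: the union-bound tail $\sum m_j^{-\alpha/2}$ must genuinely be $o(1)$, which relies on $L$ being small enough (logarithmic, with the $c/2$ factor) that even the smallest $m_j$ over the range, namely $m_L$, is still polynomially large in $M$; and simultaneously $L$ must be large enough for the eventual lower-bound argument on $L_{\max}$ (the choice $L = \lceil \frac{c}{2}\log_2 M\rceil$ is dictated by that downstream need, not by this lemma). I would also double-check the indexing convention for $m_j$ against the statement of Lemma~\ref{lem:few} so that the constant $C(c)$ comes out precisely as stated, and confirm that ``bad in epoch $j$'' events for a fixed bucket across different epochs are being counted correctly — we are bounding a union (a bucket counted once if it is ever bad), which is at most the sum $\sum_j B_j$, so no independence is needed.
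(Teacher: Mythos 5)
Your proposal is correct and follows essentially the same route as the paper's proof: apply Lemma~\ref{lem:few} per epoch, union-bound the failure probabilities (using that $L = \lceil \tfrac{c}{2}\log_2 M\rceil$ makes the smallest $m_j$ still polynomially large so the $O(\log M)$-term sum is $o(1)$), and on the good event bound $\sum_j B_j$ by the convergent geometric series, yielding $C(c)M^{1-\alpha/2}$. Your extra sanity checks (indexing of $m_j$, that $1-\alpha/2>0$, and that counting a union by $\sum_j B_j$ needs no independence) are all consistent with the paper.
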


We are now ready to prove Theorem \ref{theorem:main}.
The basic idea is that for the chosen value of $L$ (which is  $\Omega(\log n)$), 
with high probability the number of buckets that have ever gone bad in 
$L$ rounds is less than the total number of buckets remaining. Thus, with high probability, there will be a bucket remaining which was never bad in the first $L$ rounds.
A bucket with this property received $\Omega(\log n)$ incremental load in each of  the $L$ rounds,
resulting in a bucket with total load $\Omega(\log^2 n)$.

\begin{proof} 
(\cref{theorem:main})
Since each deleted bucket has a unique ancestor among the original $M$ buckets, \cref{lem:few_total} bounds how many buckets \emph{ever} go bad. Provided that $\alpha(c) > c$, which holds for every $c < 0.16$, we have
\[
M^{1 - \frac{\alpha}{2}} = o\left(M^{1 - \frac{c}{2}}\right),
\]
and hence
\[
C(c)\,M^{1 - \frac{\alpha}{2}} < m_L
\]
for sufficiently large $M$, forcing the existence of at least one bucket that is never bad in any epoch $j < L$.

For such a never-bad bucket, its load in epoch~$j$ satisfies
\[
X_j \ge c\,\mu_j \ge \frac{c}{3} \log_2\left( \tfrac{2}{3} m_j \right),
\]
using the lower bound $\mu_j \ge \frac{1}{3} \log_2\left( \tfrac{2}{3} m_j \right)$ from earlier.

Summing over \(L = \left\lceil \frac{c}{2} \log_2 M \right\rceil\) epochs, we obtain a lower bound on the total load on that bucket:
\begin{align*}
L_{\max}
  &\ge \frac{c}{3} \sum_{j=0}^{L-1} \log_2\left( \tfrac{2}{3} m_j \right) \\
  &= \frac{c}{3} \sum_{j=0}^{L-1} \left( \log_2 M - j + \log_2 \tfrac{2}{3} \right) \\
  &\ge \frac{c}{3} \cdot L \cdot \left( \log_2 M - L + 1 + \log_2 \tfrac{2}{3} \right),
\end{align*}
where we use that $\log_2 m_j = \log_2 (M \cdot 2^{-j}) = \log_2 M - j$ and that $\log_2\left( \tfrac{2}{3} m_j \right)$ is decreasing in $j$, so each term is at least the last one.

Substituting the bounds \( \frac{c}{2} \log_2 M \le L \le \frac{c}{2} \log_2 M + 1 \), we control both terms in the product:
\begin{align*}
L_{\max}
  &\ge \frac{c}{3} \cdot L \cdot \left( \log_2 M - L + 1 + \log_2 \tfrac{2}{3} \right) \\
  &\ge \frac{c}{3} \cdot \frac{c}{2} \log_2 M \cdot \left( \log_2 M - \left( \frac{c}{2} \log_2 M + 1 \right) + 1 \right) \\
  &= \frac{c}{3} \cdot \frac{c}{2} \log_2 M \cdot \left(1 - \frac{c}{2} \right) \log_2 M \\
  &= \frac{c^2}{6} \left(1 - \frac{c}{2} \right) \log_2^2 M.
\end{align*}

Thus,
\[
L_{\max} = \Omega(\log^2 M),
\]
completing the proof of \cref{theorem:main}.
\end{proof}
\newpage
\clearpage
\bibliographystyle{unsrt}
\bibliography{ref}

\clearpage
\appendix\
\section{Proof of the Lemmas} \label{Sec: A}
\begin{proof}
(\cref{lem:few})
For a single bucket, the multiplicative Chernoff lower tail bound
\[
  \Pr[X \leq (1 - \delta)\mu] \leq \exp\left(-\frac{\delta^2 \mu}{2}\right)
\]
with $\delta = 1 - c$ gives:
\[
  p_j := \Pr[X_{j,i} \leq c\mu_j]
  \leq \exp\left(-\frac{(1 - c)^2}{2} \mu_j\right).
\]
Since $\mu_j = k_j / 6 \ \geq \log_2(\tfrac{2}{3} m_j)/3 = \ln(\tfrac{2}{3} m_j)/(3 \ln 2)$, we have:
\[
  p_j \leq \left(\tfrac{2}{3} m_j\right)^{-\alpha}.
\]

Let $Y_i$ be the indicator that bucket $i$ is bad. Then $B_j = \sum_i Y_i$ is a sum of Bernoulli variables with $\mathbb{E}[Y_i] \leq p_j$. Hence,
\[
  \mathbb{E}[B_j] \leq m_j \cdot p_j = \left(\tfrac{3}{2}\right)^\alpha m_j^{1 - \alpha}.
\]
Now apply Markov's inequality:
\[
  \Pr\left(B_j \geq \left(\tfrac{3}{2}\right)^\alpha m_j^{1 - \frac{\alpha}{2}}\right)
  \leq \frac{\mathbb{E}[B_j]}{\left(\tfrac{3}{2}\right)^\alpha m_j^{1 - \frac{\alpha}{2}}}
  \leq m_j^{-\frac{\alpha}{2}}.
\]
\end{proof}

\begin{proof}
(\cref{lem:few_total})
We first bound the total failure probability over all $L$ epochs. Since
\[
   L = \left\lceil\tfrac{c}{2}\,\log_2 M\right\rceil \leq \tfrac{c}{2}\,\log_2 M + 1,
\]
we have
\[
   m_L = M/2^L \geq \tfrac{1}{2} M^{\,1 - \frac{c}{2}}.
\]
From Lemma~\ref{lem:few}, the probability that epoch \( j \) has more than \( (\tfrac{3}{2})^\alpha m_j^{1 - \frac{\alpha}{2}} \) bad buckets is at most \( m_j^{-\alpha/2} \). Thus, the union bound gives:
\[
   \sum_{j=0}^{L-1} \Pr[\text{epoch } j \text{ fails}]
   \le \sum_{j=0}^{L-1} m_j^{-\alpha/2}
   \le L \cdot m_{L-1}^{-\alpha/2}.
\]
Since \( m_{L-1} \ge M^{1 - \frac{c}{2}} \), we have
\[
   L \cdot m_{L-1}^{-\alpha/2}
   \le \left( \tfrac{c}{2} \log_2 M + 1 \right) \cdot M^{-\beta},
   \text{where } \beta := \tfrac{\alpha}{2}(1 - \tfrac{c}{2}) > 0.
\]
As \( M \to \infty \), this product is \( o(1) \), so with high probability all epochs satisfy the per-epoch bound.

Conditioned on this event, we now sum the bad buckets over all epochs:
\[
   \sum_{j=0}^{L-1} B_j
   \le (\tfrac{3}{2})^\alpha \sum_{j=0}^{L - 1} m_j^{1 - \frac{\alpha}{2}}
   = (\tfrac{3}{2})^\alpha M^{1 - \frac{\alpha}{2}} \sum_{j=0}^{L - 1} 2^{-j(1 - \frac{\alpha}{2})}.
\]
This geometric sum is bounded by
\[
   \sum_{j=0}^\infty 2^{-j(1 - \frac{\alpha}{2})} = \frac{1}{1 - 2^{-(1 - \frac{\alpha}{2})}}.
\]
Hence the total number of bad buckets is at most
\[
   C(c)\, M^{1 - \frac{\alpha}{2}},
   \quad\text{where } C(c) = (\tfrac{3}{2})^\alpha \cdot \frac{1}{1 - 2^{-(1 - \frac{\alpha}{2})}}.
\]

\end{proof}

\section{The Frequency of Each Case in Version 3}
\label{Sec: B}
We ran Version 3 on 25 random 3-regular graphs of size 8192 and 25 random 8-regular graphs of size 4096, and computed the average percentage of each case across the different graphs. The results show that Case 1 and Case 2B occur relatively infrequently compared to Case 2A and Case 3. As expected, when the graphs become denser, Case 3 becomes dominant.

\begin{figure}[htbp]
    \centering
    \begin{minipage}[t]{0.425\textwidth}
        \centering
        \includegraphics[width=\linewidth]{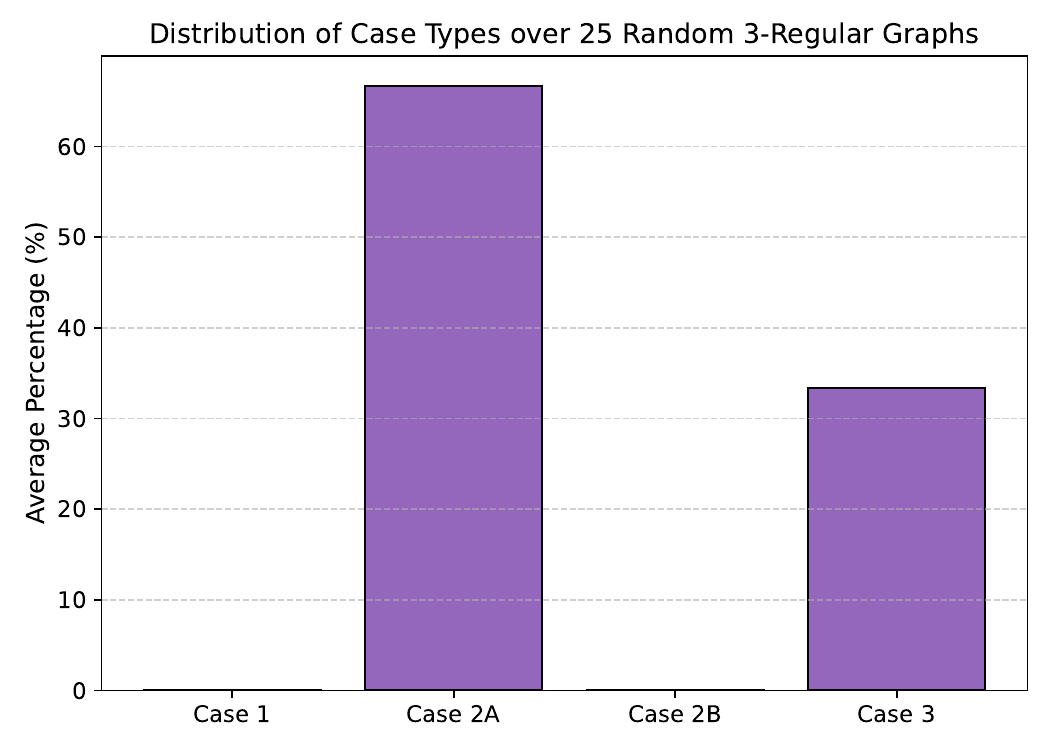}
        \caption{
        Average percentage of each case for 25 random 3-regular graphs of size 8192.
        }
        \label{fig:case_type_d3}
    \end{minipage}
    \hfill
    \begin{minipage}[t]{0.425\textwidth}
        \centering
        \includegraphics[width=\linewidth]{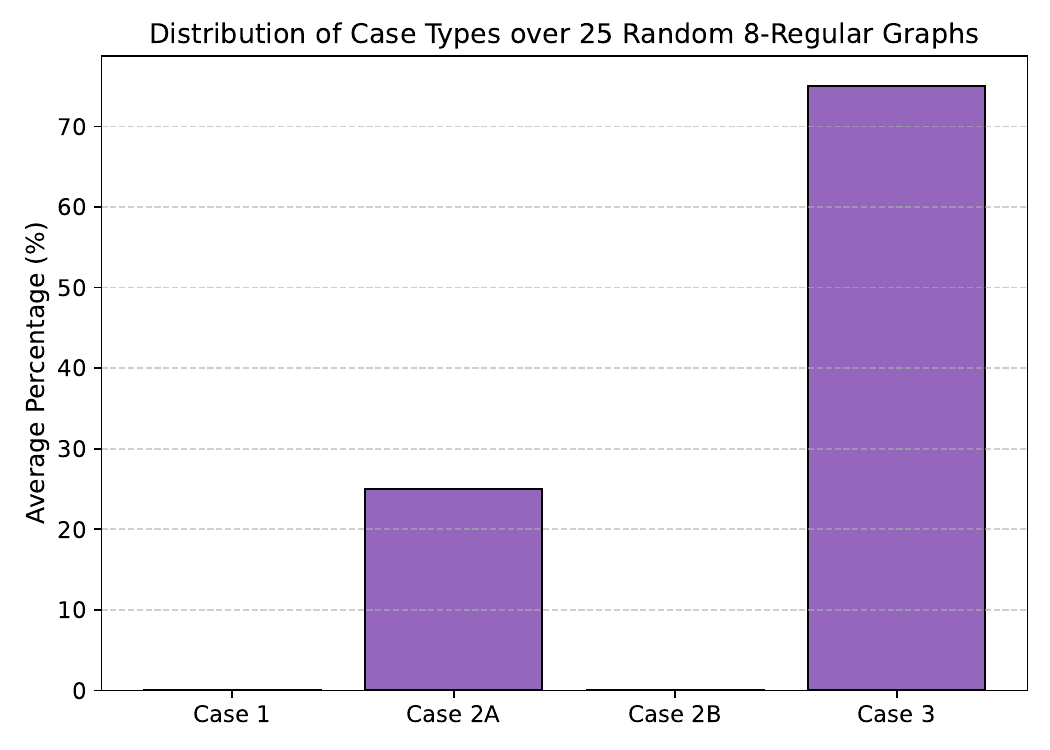}
        \caption{
        Average percentage of each case for 25 random 8-regular graphs of size 4096.
        }
        \label{fig:case_type_d8}
    \end{minipage}
\end{figure}



\newpage

\section{Process 1 results in higher remaining loads than Process 2}
\label{Sec: C}

We define three new processes, \(\mathrm{P1}\), \(\mathrm{P1a}\), and \(\mathrm{P2}\), where \(\mathrm{P1}\) is essentially equivalent to Process 1 from the main text (as we are only interested in the loads of the remaining bins), and \(\mathrm{P2}\) is a slight variant of Process 2 in which only \(k-3\) balls are thrown per round instead of \(k\). The intermediate process \(\mathrm{P1a}\) serves as a bridge to enable a direct, step-by-step comparison between \(\mathrm{P1}\) and \(\mathrm{P2}\).

\begin{theorem}[Pointwise comparison between P1 and P2]
\label{thm:P1_vs_P2}
Let \(m,k\in\mathbb{N}\) with \(m\) divisible by \(6\).
Start from any initial load vector \(L^{(0)}\in\mathbb{N}^m\).
Run the following for \(m/6\) rounds:

\begin{description}
  \item[\textbf{P1}\,:]%
    At the \emph{start} of each round, remove the three heaviest bins,
    then throw \(k-3\) balls uniformly at random into the remaining bins.
  \item[\textbf{P1a}:]%
    In each round  
    (i) mark the three heaviest bins,  
    (ii) throw \(k-3\) balls uniformly at random into \emph{all} \(m\) bins.  
    After the last round, delete all marked bins.
  \item[\textbf{P2}\,:]%
    Throw \(k-3\) balls uniformly at random into all \(m\) bins every round.  
    After the last round, delete the \(m/2\) heaviest bins.
\end{description}

Let \(L_{\mathrm{P\ast}}(R)\in\mathbb{N}^{m/2}\) be the vector of surviving bin loads
(after deletion), sorted in non-increasing order, under random realisation \(R\).
Then for every \(R\),
\[
   L_{\mathrm{P1}}(R) \ge L_{\mathrm{P1a}}(R) \ge L_{\mathrm{P2}}(R)
   \quad \text{(entrywise)}.
\]
In other words, for every outcome of the random process, the remaining bins in Process 1 are at least as heavily loaded (in every coordinate) as those in Process 2.
\end{theorem}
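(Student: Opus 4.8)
# Proof Proposal for Theorem~\ref{thm:P1_vs_P2}

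The plan is to establish the two inequalities separately, using \(\mathrm{P1a}\) as the bridge promised in the statement. The overarching strategy is a \emph{coupling}: we fix a single source of randomness \(R\) (the sequence of ball destinations, thought of as i.i.d. uniform choices over bin labels, together with the tie-breaking rule for selecting heaviest bins) and run all three processes on this same \(R\), then compare the sorted surviving-load vectors entrywise. Since the statement is pointwise in \(R\), no probabilistic argument is needed once the coupling is set up; everything reduces to a deterministic monotonicity argument round by round.

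For the comparison \(L_{\mathrm{P1a}}(R) \ge L_{\mathrm{P2}}(R)\): both processes throw \(k-3\) balls uniformly into all \(m\) bins every round and differ only in \emph{which} bins get deleted at the end. In \(\mathrm{P1a}\) we delete the bins that were heaviest at various earlier moments (the marked ones), whereas in \(\mathrm{P2}\) we delete the \(m/2\) globally heaviest bins at the very end. Since the ball tosses are identical, the final load multiset before deletion is the \emph{same} for both processes. \(\mathrm{P2}\) then removes the \(m/2\) largest entries of this common multiset, which is exactly the deletion that minimizes every coordinate of the sorted surviving vector; \(\mathrm{P1a}\) removes some other size-\(m/2\) subset. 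Hence the sorted vector of what survives in \(\mathrm{P2}\) is entrywise \(\le\) the sorted vector of what survives under any other choice of \(m/2\) deleted bins, in particular \(\mathrm{P1a}\)'s. This direction is essentially a one-line observation once the coupling is in place: deleting the top half is the coordinatewise-minimal deletion. One must check that \(\mathrm{P1a}\) does mark exactly \(m/2\) distinct bins over the \(m/6\) rounds (three per round, \(m/6\) rounds, and the marks can be taken distinct since once a bin is marked we may have the tie-break prefer unmarked bins—or more cleanly, we just need \(\mathrm{P1a}\) to delete \emph{some} \(m/2\) bins, matching \(\mathrm{P2}\)'s deletion count), which is a routine bookkeeping point.

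For the comparison \(L_{\mathrm{P1}}(R) \ge L_{\mathrm{P1a}}(R)\): here the deletion times differ. \(\mathrm{P1}\) removes its three heaviest bins at the \emph{start} of each round, so subsequent balls in that round (and all later rounds) land only among the bins that will ultimately survive; \(\mathrm{P1a}\) keeps the marked bins in play, so they continue to absorb balls that, in \(\mathrm{P1}\), would have been redistributed among the survivors. The natural approach is an inductive coupling argument over rounds maintaining the invariant that, after processing round \(t\), the sorted load vector of the currently-alive bins in \(\mathrm{P1}\) dominates entrywise the sorted load vector of the currently-unmarked bins in \(\mathrm{P1a}\). The subtle point is that the two processes use the same ball-destination sequence but the destinations are bin \emph{labels}, and a ball that hits a marked/deleted bin behaves differently in the two processes—so the coupling should be specified not by raw labels but by a \emph{rank-based} re-randomization (e.g., a ball's uniform choice is interpreted as a rank among the currently-eligible bins in each process), and one checks that this is a valid coupling of the two processes' laws. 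Granting that, the inductive step is: throwing the same number of balls into a smaller eligible set (P1's survivors) versus a larger one (P1a's all-\(m\) bins, with marked ones later removed) can only concentrate load more in P1, preserving the entrywise domination; majorization-type lemmas (adding a ball to the multiset and then taking order statistics is monotone) close the step.

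The main obstacle I anticipate is making the \(L_{\mathrm{P1}} \ge L_{\mathrm{P1a}}\) coupling fully rigorous: one needs the \emph{same} realization \(R\) to drive both processes in a way that (a) marginally reproduces each process's correct distribution and (b) makes the entrywise inequality hold for every \(R\), despite the fact that the set of bins a ball can land in evolves differently in the two processes. The clean way is to define \(R\) abstractly as a sequence of values in \([0,1]\) (or in \([0,1]^{k-3}\) per round) and let each process convert each value into a bin via its own current eligibility set; then prove the invariant by induction, with the heart of the argument being a deterministic lemma stating that if sorted vector \(a\) dominates sorted vector \(b\) entrywise and \(b\) has (weakly) more coordinates, then inserting one unit into any coordinate of \(a\) and into the coordinate of \(b\) "selected by the same quantile" preserves domination after re-sorting. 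Assembling these two chains of inequalities then yields \(L_{\mathrm{P1}}(R) \ge L_{\mathrm{P1a}}(R) \ge L_{\mathrm{P2}}(R)\) for every \(R\), as claimed.
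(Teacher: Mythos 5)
Your decomposition into two chains via $\mathrm{P1a}$, and the argument for $L_{\mathrm{P1a}} \ge L_{\mathrm{P2}}$, both match the paper: since $\mathrm{P1a}$ and $\mathrm{P2}$ can be coupled to place balls identically, $\mathrm{P2}$ simply deletes the coordinatewise-minimizing set of $m/2$ bins, so its sorted survivor vector is no larger. That half is fine.

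The gap is in the coupling you propose for $L_{\mathrm{P1}} \ge L_{\mathrm{P1a}}$. You suggest a quantile coupling where the same uniform $u\in[0,1]$ is converted to a rank among the $m_t$ alive bins in $\mathrm{P1}$ and a rank among all $m$ bins in $\mathrm{P1a}$, together with a lemma asserting that inserting one unit ``at the same quantile'' into two sorted vectors of different lengths preserves entrywise domination. That lemma is false, and the coupling does not preserve your invariant. Concretely, suppose $\mathrm{P1}$'s alive sorted vector is $a=(3,1,0)$ and $\mathrm{P1a}$ has four bins: one marked with load $5$ and three unmarked with loads $(3,1,0)$, so the invariant holds with equality. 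Take $u=0.7$: $\lfloor 0.7\cdot 3\rfloor+1=3$ in $\mathrm{P1}$, and $\lfloor 0.7\cdot 4\rfloor+1=3$ in $\mathrm{P1a}$. $\mathrm{P1}$ increments its lightest alive bin, giving $a'=(3,1,1)$. In $\mathrm{P1a}$, rank~$3$ among the four bins $(5,3,1,0)$ is the unmarked bin of load $1$, so the unmarked sorted vector becomes $(3,2,0)$; now $a'_2=1<2$ and domination fails. The root cause is that $\mathrm{P1a}$'s rank is computed relative to a set containing heavy marked bins, so the same quantile can point to a systematically heavier unmarked bin in $\mathrm{P1a}$ than the alive bin it selects in $\mathrm{P1}$.

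The paper instead fixes a sequence of random bin \emph{labels} $S_t\subset[M]$ per round and reuses them in all processes: $\mathrm{P1a}$ places a ball at every label in $S_t$, while $\mathrm{P1}$ places a ball at each label of $S_t$ that is still alive and resamples uniformly among its alive bins for any label that has already been deleted. Under this coupling a ball either lands in the same surviving bin in both processes, or (in $\mathrm{P1}$) is redirected from a dead label to some survivor; every survivor of $\mathrm{P1}$ therefore receives at least as many balls as the corresponding bin in $\mathrm{P1a}$, giving the bound directly without any rank-matching lemma. If you prefer a rank-style formulation of this same idea, the repair is: when $\mathrm{P1a}$'s ball lands in an unmarked bin at rank $s$ \emph{among unmarked bins}, send $\mathrm{P1}$'s ball to rank $s$ among alive bins; when $\mathrm{P1a}$'s ball lands in a marked bin, give $\mathrm{P1}$ a fresh uniform ball over its alive bins. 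That coupling has correct marginals and does preserve the sorted-vector invariant, whereas the ``same quantile over different-length eligible sets'' coupling you describe does not.
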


\begin{proof}
We couple all three processes (P1, P1a, and P2) using the same randomness. Let \(M = m\) denote the total number of bins at the beginning.

For each round \(t = 1, \dots, m/6\), we sample \(k - 3\) bins uniformly at random \emph{without replacement} from the full set \([M] = \{1, 2, \dots, M\}\), and record the selected bin labels as a list \(S_t = (b_{t,1}, \dots, b_{t,k-3})\). These selections are fixed and reused across all processes. We compare how each process uses these bin selections:

\begin{itemize}
  \item Since no bins are removed in P1a, every selected bin is valid, and each ball is placed directly into its assigned bin.

  \item In \textbf{P2}, the same placement rule applies: all bins are available during every round, and we use the same selections \(S_t\). Therefore, P1a and P2 result in the \emph{same} ball placement. The only difference is that P1a removes a fixed set of bins (those marked across rounds), while P2 removes the heaviest \(m/2\) bins at the end. Since P2 removes the worst-case set and P1a removes a possibly suboptimal one, the remaining loads in P2 can only be smaller. Hence,
  \[
    L_{\mathrm{P1a}}(R) \ge L_{\mathrm{P2}}(R).
  \]

  \item In \textbf{P1}, the top 3 heaviest bins are removed \emph{at the beginning of each round}. Let \(m_t = M - 3t\) be the number of bins remaining in round \(t\). For each selected bin label \(b_{t,j} \in S_t\), we process it as follows:
  \begin{itemize}
    \item If \(b_{t,j} \leq m_t\), then the bin exists in both P1 and P1a during round \(t\), and we place a ball into bin \(b_{t,j}\) in both processes.
    \item If \(b_{t,j} > m_t\), then that bin has been removed in P1. In this case, we resample a new bin uniformly at random from the remaining \(m_t\) bins (i.e., from \(\{1,\dots, m_t\}\)), and place the ball into the newly chosen bin in P1.
  \end{itemize}
  In contrast, P1a still places a ball into the original bin \(b_{t,j} > m_t\), which exists in P1a but has already been removed in P1.

  Therefore, every time a resampling occurs in P1, the ball is redirected to a bin that survives that round. In P1a, the same ball goes into a bin that is eventually removed. So the load of any bin that survives P1 either:
  \begin{itemize}
    \item receives the same ball as in P1a (if no resampling happened), or
    \item receives an additional ball that was redirected away from a deleted bin.
  \end{itemize}
  This shows that the load of every surviving bin in P1 is at least as large as the load of the same bin in P1a. Since both P1 and P1a delete exactly \(m/2\) bins, the remaining bins in P1 have loads that are at least as large as those in P1a:
  \[
    L_{\mathrm{P1}}(R) \ge L_{\mathrm{P1a}}(R).
  \]
\end{itemize}

\noindent
Combining the two comparisons above, we conclude:
\[
L_{\mathrm{P1}}(R) \ge L_{\mathrm{P1a}}(R) \ge L_{\mathrm{P2}}(R).
\]
\end{proof}

\end{document}